\newtheorem{remark}{\bfseries Remark}
\newtheorem{theorem}{\bfseries Theorem}
\newtheorem{assumption}{\bfseries Assumption}
\newenvironment{varalgorithm}[1]
  {\algorithm}
  {\endalgorithm}
\newenvironment{list4}{
	\begin{list}{$\bullet$}{%
			\setlength{\itemsep}{0.05cm}
			\setlength{\labelsep}{0.2cm}
			\setlength{\labelwidth}{0.3cm}
			\setlength{\parsep}{0in} 
			\setlength{\parskip}{0in}
			\setlength{\topsep}{0in} 
			\setlength{\partopsep}{0in}
			\setlength{\leftmargin}{0.16in}}}
	{\end{list}}
\newenvironment{list4a}{
	\begin{list}{$\bullet$}{%
			\setlength{\itemsep}{0.05cm}
			\setlength{\labelsep}{0.2cm}
			\setlength{\labelwidth}{0.3cm}
			\setlength{\parsep}{0in} 
			\setlength{\parskip}{0in}
			\setlength{\topsep}{0in} 
			\setlength{\partopsep}{0in}
			\setlength{\leftmargin}{0.16in}}}
	{\end{list}}
\title{\LARGE \bf
Distributed Quantized Average Consensus in Open Multi-Agent Systems with Dynamic Communication Links
}
\author{Jiaqi Hu, Karl H.~Johansson, and Apostolos~I.~Rikos% <-this % stops a space
% \thanks{*This work was not supported by any organization}% <-this % stops a space
\thanks{Jiaqi Hu and Apostolos~I.~Rikos are with the AI Thrust, Information Hub, The Hong Kong University of Science and Technology (Guangzhou), Guangzhou, China. 
Apostolos~I.~Rikos is also affiliated with the Department of Computer Science and Engineering, The Hong Kong University of Science and Technology, Clear Water Bay, Hong Kong. 
E-mails: {\tt~jhu021@connect.hkust-gz.edu.cn; apostolosr@hkust-gz.edu.cn}}
\thanks{Karl H.~Johansson is with the Division of Decision and Control Systems, KTH Royal Institute of Technology, SE-100 44 Stockholm, Sweden. 
    He is also affiliated with Digital Futures, SE-100 44 Stockholm, Sweden. 
    E-mail:{\tt~kallej@kth.se}.
        }
}
\begin{document}

\maketitle
\thispagestyle{empty}
\pagestyle{empty}

%%%%%%%%%%%%%%%%%%%%%%%%%%%%%%%%%%%%%%%%%%%%%%%%%%%%%%%%%%%%%%%%%%%%%%%%%%%%%%%%
\begin{abstract}
In this paper, we focus on the distributed quantized average consensus problem in open multi-agent systems consisting of communication links that change dynamically over time. 
Open multi-agent systems exhibiting the aforementioned characteristic are referred to as \textit{open dynamic multi-agent systems} in this work. 
We present a distributed algorithm that enables active nodes in the open dynamic multi-agent system to calculate the quantized average of their initial states. 
Our algorithm consists of the following advantages: (i) ensures efficient communication by enabling nodes to exchange quantized valued messages, and (ii) exhibits finite time convergence to the desired solution. 
We establish the correctness of our algorithm and we present necessary and sufficient topological conditions for it to successfully solve the quantized average consensus problem in an open dynamic multi-agent system. 
Finally, we illustrate the performance of our algorithm with numerical simulations. 

\end{abstract}

%%%%%%%%%%%%%%%%%%%%%%%%%%%%%%%%%%%%%%%%%%%%%%%%%%%%%%%%%%%%%%%%%%%%%%%%%%%%%%%%
\section{Introduction}\label{sec:intro}
In recent years, multi-agent systems (MAS) have gathered significant attention due to the growing interest of the scientific community in control and coordination algorithms. 
MAS consist of a set of nodes (or agents) that interact and collaborate to achieve a common objective. 
Examples include social networks, networks of devices such as sensors or computing entities, or groups of vehicles or robots \cite{2007:olfati-saber_consensus}. 

One problem of particular importance in distributed control is the distributed average consensus problem. 
In this problem each node in the network starts with an initial state. 
The goal is to calculate the average of these initial states through local interactions and information exchange among neighboring nodes (see \cite{2018:BOOK} and references therein).  
The distributed average consensus problem has received extensive attention with various proposed algorithms focusing on different scenarios, such as: real-valued communication among nodes in \cite{SEYBOTH:2013}, quantized communication among nodes in \cite{2021:Rikos_Hadj_Accumul_TAC}, and operation over unreliable networks in \cite{chrisTAC:2016}. 

Recently, various works have focused on developing algorithms for MAS that can handle a more complex scenario: nodes entering and leaving the network. 
This emerging category of systems is known as open multi-agent systems (OMAS) \cite{Vizuete2024}. 
% Existing works on OMAS have focused on a diverse range of problems. 
% The work in \cite{huynh2006integrated} presents models for trust and reputation operating over OMAS. 
% The work in \cite{2017_Hendrickx_Martin_CDC} presents a framework that relies on  pairwise average gossips. 
% In \cite{golpayegani2019using} a model that relies on social reasoning for enhancing collaboration among nodes is presented. 
% In \cite{2016_Hendrickx_Allerton} the behavior of the average pairwise gossip algorithm is studied, and in \cite{2021_Franceschelli_Frasca_TAC} the dynamical properties of OMAS are analyzed. 
% The work in \cite{2022_Dashti_Mauro_IEEELCSS} focuses on distributed mode computation and \cite{2024_Oliva_Scala_TAC_Open} presents a framework for heterogeneous pairwise interaction among nodes. 
Specifically, trust and reputation models were introduced in \cite{huynh2006integrated}, while \cite{2017_Hendrickx_Martin_CDC} developed a framework based on pairwise average gossip algorithms. In \cite{golpayegani2019using} the authors proposed a model leveraging social reasoning to improve collaboration among nodes. 
The behavior of average pairwise gossip algorithms in OMAS was investigated in \cite{2016_Hendrickx_Allerton}, and \cite{2021_Franceschelli_Frasca_TAC} provided an analysis of OMAS dynamical properties. 
Further advancements include distributed mode computation, as explored in \cite{2022_Dashti_Mauro_IEEELCSS}, and a framework for heterogeneous pairwise interactions among nodes presented in \cite{2024_Oliva_Scala_TAC_Open}.
% On a different note, the works in \cite{2020_Hendrickx_Rabbat_CDC, 2023_Hayashi_TAC} focus on distributed optimization on OMAS, and the work in \cite{2023_Nakamura_Inuiguchi_IEEECSS} focuses on adversarial multi-armed bandits. 
Shifting focus, research has expanded into distributed optimization for OMAS in \cite{2020_Hendrickx_Rabbat_CDC, 2023_Hayashi_TAC}, while \cite{2023_Nakamura_Inuiguchi_IEEECSS} explored adversarial multi-armed bandits in the same context. 
As aforementioned works assumed undirected communication between nodes, more recent works relaxed this assumption, focusing on OMAS that exhibit directed communication. 
% In \cite{2024:CDC_Hadjic_Garcia} the authors present a distributed averaging framework that relies on running-sums and in \cite{2024:CDC_Themis_Open} the proposed averaging framework relies on acknowledgment feedback signals. 
Advancements in this direction include the distributed averaging framework based on running-sums in \cite{2024:CDC_Hadjic_Garcia}, and the averaging approach utilizing acknowledgment feedback signals in \cite{2024:CDC_Themis_Open}. 

It is important to note here that current approaches in the literature focus on algorithms that require nodes to exchange real-valued messages. 
This imposes high bandwidth requirements and compromises their resource efficiency, limiting their applicability in real-world scenarios (where communication links may have limited capacity). 
Additionally, current approaches assume constant and reliable connectivity among active nodes (i.e., nodes currently participating in the OMAS). 
Possible changes in the OMAS's communication links (e.g., due to mobility of active components or environmental interference) are not adequately addressed, thus restricting even further the practical implementation of current approaches. 
To the authors' knowledge, no existing approach addresses the challenges of limited bandwidth and dynamic communication in open multi-agent systems. 
As a result, developing communication-efficient distributed algorithms able to operate over open multi-agent systems consisting of dynamic communication links remains an open problem in the literature. 

% \vspace{.1cm} 

\noindent 
\textbf{Main Contributions.} 
Motivated by the aforementioned gap in the literature, we present a distributed average consensus algorithm designed to operate in open multi-agent systems with dynamic communication links. 
Our algorithm is the first to address the challenges of limited bandwidth and dynamic connectivity in open multi-agent systems with directed communication. 
Our main contributions are the following: 
\begin{list4}
    \item We introduce a novel distributed algorithm that enables nodes to almost surely reach quantized average consensus in finite time while operating in an open multi-agent system with dynamic communication links. 
    Unlike existing approaches in the literature, our algorithm enables nodes to (i) operate in a communication efficient manner by exchanging quantized valued messages, and (ii) operate in the presence of dynamically-changing directed communication links in the network (see Algorithm~\ref{algorithm1}). 
    \item We analyze the operation of our proposed algorithm and establish its finite time convergence. 
    Additionally, we provide necessary and sufficient topological conditions that ensure our algorithm's correctness despite the presence of dynamic directed communication links (see Theorem~\ref{main_convergence_condition_theorem}). 
\end{list4}
The algorithm in \cite{2022:Rikos_Hadj_Johan} serves as the underlying framework for designing our proposed algorithm. 
Note that \cite{2022:Rikos_Hadj_Johan} is not designed to operate in open multi-agent systems, as it cannot handle arrivals and departures of nodes. 
Our current work enhances its functionality by introducing modes that define the operation of each node depending on its status (i.e., if a node is participating, departing, or arriving in the system). 

% \todo{EMPHASIZE MORE THE DIRECTED COMMUNICATION LINKS ABOVE!}

% ===============================================
%
%
% NOTATION
%
%
% ===============================================
\section{NOTATION AND BACKGROUND}\label{sec:preliminaries}

\noindent
\textbf{Notation.}
The sets of real, rational, integer and natural numbers are denoted by $\mathbb{R}$, $\mathbb{Q}$, $\mathbb{Z}$, and $\mathbb{N}$, respectively. 
The symbol $\mathbb{Z}_+$ denotes the set of nonnegative integers, while $\mathbb{Z}_0$ denotes the set of natural numbers that includes zero. 
For two sets $A$ and $B$, $A \cap B$ denotes set intersection, 
% ($A \cap B$ is the set of elements common to both sets $A$ and $B$), 
$A \cup B$ represents set union, 
% (where $A \cup B$ is the set containing all elements from both $A$ and $B$), 
and $A \setminus B$ denotes set difference. 
% (where $A \setminus B$ is the set of elements that are in $A$ but not in $B$).
For any real number $a \in \mathbb{R}$, the floor $\lfloor a \rfloor$ denotes the greatest integer less than or equal to $a$ while the ceiling $\lceil a \rceil$ denotes the least integer greater than or equal to $a$. 
\vspace{.2cm}

Let us now introduce two key network concepts (i) open networks, and (ii) dynamic networks.
By integrating these concepts, we present the network model that we consider in our current work. 
In our work we focus in open dynamic networks (i.e., networks in which nodes enter and leave while the communication links are time-varying).  

\subsection{Open Networks}\label{subsec_open_networks}

In open networks the communication topology is directed and open. 
Directed topology means the flow of information (or interaction) between nodes follows specific directions (i.e., if node $v_j$ can transmit information to node $v_l$, it does not necessarily imply that node $v_l$ can transmit information to node $v_j$). 
Open topology means that the network allows nodes to have the freedom to enter or leave the network at their discretion. 
Open networks consist of $n$ nodes (where $n \geq 2$) communicating only with their immediate neighbors at any time step $k$. 
The finite set of all $n$ nodes \textit{potentially participating} in the open network is captured by $\mathcal{V}^\prime = \{v_1, v_2, . . ., v_n\}$. 
However, since nodes enter or leave the network, at each time step $k$ only a subset of nodes $\mathcal{V}^o[k] \subseteq \mathcal{V}^\prime$ is considered active. 
Active nodes at time step $k$ are nodes that are present in the network at time step $k$ and capable of sending or receiving information. 
Conversely, inactive nodes at time step $k$ are nodes that are not present in the network at time step $k$, and are incapable of sending or receiving information. 
We assume that interactions occur only between nodes that are active. 
% \todo{We denote the activation index as $\alpha_j[k] \in \{0, 1\}$. Specifically, $\alpha_j[k] = 1$ if $v_j \in \mathcal{V}^o[k]$ and $\alpha_j[k] = 0$ if $v_j \notin \mathcal{V}^o[k]$ (i.e., $\alpha_j[k] = 1$ if node $v_j$ is active at time step $k$ and $\alpha_j[k] = 0$ if it is inactive).}
Following this, the communication topology of the active nodes is modeled as an open digraph $\mathcal{G}^o_d[k]=(\mathcal{V}^o[k], \mathcal{E}^o[k])$, with $\mathcal{V}^o[k] \subseteq \mathcal{V}^\prime$ denoting the set of active nodes at time step $k$, and $\mathcal{E}^o[k] \subseteq \mathcal{V}^o[k] \times \mathcal{V}^o[k]$ denoting the set of edges between active nodes at time step $k$. 
The cardinality of active nodes at time step $k$ is denoted as $n^o[k] = | \mathcal{V}^o[k] |$. 
A directed edge from node $v_j$ to node $v_l$ is denoted by $m^o_{lj} \triangleq (v_l, v_j) \in \mathcal{E}^o[k]$. 
It captures the fact that node $v_j$ can transmit information to node $v_l$ (but not necessarily the other way around). 
The set of all possible edges in the network is denoted as $\mathcal{E}^o = \cup_{k=0}^{\infty} \mathcal{E}^o[k]$. 
The subset of nodes that can directly transmit information to node $v_j$ is called the set of in-neighbors of $v_j$ and is denoted as $\mathcal{N}^{-,o}_j[k] = \{v_i \in \mathcal{V}^o[k] | (v_j, v_i) \in \mathcal{E}^o[k]\}$. 
The cardinality of $\mathcal{N}^{-,o}_j[k]$ is called the in-degree of $v_j$ denoted as $D^{-,o}_j[k] = | \mathcal{N}^{-,o}_j[k] |$. 
The subset of nodes that can directly receive information from node $v_j$ is called the set of out-neighbors of $v_j$ and is denoted as $\mathcal{N}^{+,o}_j[k] = \{v_l \in \mathcal{V}^o[k] | (v_l, v_j) \in \mathcal{E}^o[k]\}$. 
The cardinality of $\mathcal{N}^{+,o}_j[k]$ is called the out-degree of $v_j$ denoted as $D^{+,o}_j[k] = | \mathcal{N}^{+,o}_j[k] |$. 

At each time step $k$, every node $v_j$ belongs in one of the following three subsets (i.e., it operates according to one of the following three operating modes). 

\textbf{Remaining.} 
This subset is denoted by $\mathcal{R}^o[k]$. 
It comprises nodes that are active at both time steps $k$ and $k+1$. Specifically, $\mathcal{R}^o[k]$ represents the nodes that maintain their active status for time steps $k$ and $k+1$ (i.e., nodes that are active in the network at time step $k$, and are not departing in the next time step $k+1$). 
It is defined as 
\begin{equation}\label{remain_set_defn} 
    \mathcal{R}^o[k] = \mathcal{V}^o[k] \cap \mathcal{V}^o[k+1]. 
\end{equation}

\textbf{Arriving.} 
This subset is denoted by $\mathcal{A}^o[k]$. 
It comprises nodes that transition from inactive to active state at time step $k$. 
Specifically, $\mathcal{A}^o[k]$ consists of nodes that are inactive at time step $k$ but become active at time step $k+1$.
It is defined as 
\begin{equation}\label{arrive_set_defn}
    \mathcal{A}^o[k] = \mathcal{V}^o[k+1] \setminus \mathcal{V}^o[k]. 
\end{equation}

\textbf{Departing.} 
This subset is denoted by $\mathcal{D}^o[k]$. 
It comprises nodes that are active at time step $k$ but become inactive at time step $k+1$. 
Specifically, $\mathcal{D}^o[k]$ represents the nodes that exit the network at time step $k$. 
It is defined as 
\begin{equation}\label{depart_set_defn} 
\mathcal{D}^o[k] = \mathcal{V}^o[k] \setminus \mathcal{V}^o[k+1]. 
\end{equation}

Nodes in an open network operate according to the previously described modes, transitioning between states as follows. 
At each time step $k$, the inactive nodes can choose the arriving mode to become active. 
The active nodes can choose (i) the departing mode to leave the network and become inactive, or (ii) the remaining mode to remain active. 

\subsection{Dynamic Networks}\label{subsec_dynam_networks}

For introducing dynamic networks we borrow the following description from  \cite[Section~$2.1$]{2022:Rikos_Hadj_Johan}. 
In dynamic networks the communication topology is directed and dynamic (i.e., communication links among nodes change over time). 
The dynamically changing directed network can be captured by a sequence of directed graphs (digraphs), defined as $\mathcal{G}_d^d[k] = (\mathcal{V}^\prime, \mathcal{E}^d[k])$ (for $k=0, 1, ...$), where $\mathcal{V}^\prime =  \{v_1, v_2, \dots, v_n\}$ is the set of nodes and $\mathcal{E}^d[k] \subseteq \mathcal{V}^\prime \times \mathcal{V}^\prime$ is the set of edges at time step $k$. 
% Note here that the definitions of (i) the cardinality of nodes denoted as $n$, (ii) a directed edge from node $v_i$ to node $v_j$ denoted as $m_{ji}$, (iii) the set of in-neighbors and out-neighbors of a node $v_j$ denoted as $\mathcal{N}^{-,d}_j[k]$ and $\mathcal{N}^{+,d}_j[k]$, and (iv) the cardinalities of $\mathcal{N}^{-,d}_j[k]$ and $\mathcal{N}^{+,d}_j[k]$ denoted as $D^{-,d}_j[k]$ and $D^{+,d}_j[k]$, are defined similarly as in Section~\ref{subsec_open_networks}. 
Note that several key definitions remain consistent with those presented in Section~\ref{subsec_open_networks}. 
These include (i) the cardinality of nodes denoted as $n^d = | \mathcal{V}^\prime |$, (ii) a directed edge from node $v_j$ to node $v_l$ represented as $m^d_{lj}$, (iii) the sets of in-neighbors and out-neighbors of a node $v_j$ denoted as $\mathcal{N}^{-,d}_j[k]$ and $\mathcal{N}^{+,d}_j[k]$, respectively, and (iv) the cardinalities of these neighbor sets $\mathcal{N}^{-,d}_j[k]$ and $\mathcal{N}^{+,d}_j[k]$, represented by $D^{-,d}_j[k]$ and $D^{+,d}_j[k]$, respectively. 
Given a dynamic digraph $\mathcal{G}_d^d[k] = (\mathcal{V}^\prime, \mathcal{E}^d[k])$ for $k = 1, 2, ..., m$, where $m \in \mathbb{N}$, its \textit{union digraph} is defined as $\mathcal{G}^{d, \{1, 2, ..., m\}}_d = (\mathcal{V}^\prime, \cup_{k = 1}^{m} \mathcal{E}^d[k])$. 
A dynamic digraph is \textit{jointly strongly connected} over the interval $k = 1, 2, ..., m$ if its corresponding union graph $\mathcal{G}^{d, \{1, 2, ..., m\}}_d$ forms a strongly connected digraph. 
A digraph is strongly connected if for each pair of nodes $v_j, v_l$, (where $v_j \neq v_l$) there exists a directed \textit{path} from $v_j$ to $v_l$. 
This means that in the union graph, for any two distinct nodes $v_j, v_l \in \mathcal{V}^\prime$, there exists a directed \textit{path} from $v_j$ to $v_l$. 

\subsection{Open Dynamic Networks}\label{subsec_open_dynam_networks}
In our current work, we explore networks characterized by a communication topology that is simultaneously directed, open, and dynamic. 
More specifically, we focus on networks in which at every time step $k$ (i)  the flow of information between nodes follows specific directions, (ii) nodes have the freedom to enter or leave the network, and (iii) communication links among nodes change over time. 
Note here that a key similarity between open dynamic networks and the previously discussed open networks is the separation of nodes in active and inactive. 
However, the critical difference between these two network types lies in the behavior of communication links among active nodes. 
In open networks, these links remain static over time. 
In contrast, in open dynamic networks communication links change dynamically even among nodes that are active in the network.

Open dynamic networks consist of $n$ nodes (where $n \geq 2$), and can be captured by a sequence of digraphs defined as $\mathcal{G}_d[k]=(\mathcal{V}[k], \mathcal{E}[k])$. 
The subset $\mathcal{V}[k] \subseteq \mathcal{V}^\prime$ denotes the set of active nodes at time step $k$. 
Note that the subsets of active and inactive nodes along with the set of all $n$ nodes potentially participating to the open network (captured by $\mathcal{V}^\prime = \{v_1, v_2, . . ., v_n\}$) were defined in Section~\ref{subsec_open_networks}. 
The subset $\mathcal{E}[k] \subseteq \mathcal{V}[k] \times \mathcal{V}[k]$ denotes the set of edges between active nodes at time step $k$. 
Note here that $\mathcal{V}[k] = \mathcal{V}^{o}[k]$, and $\mathcal{E}[k] = \mathcal{E}^{o}[k] \cap \mathcal{E}^{d}[k]$, during every time step $k$ (where $\mathcal{V}^{o}[k], \mathcal{E}^{o}[k]$ were defined in Section~\ref{subsec_open_networks}, and $ \mathcal{E}^{d}[k]$ was defined in Section~\ref{subsec_dynam_networks}). 
At every time step $k$, the cardinality of active nodes is denoted as $n[k] = | \mathcal{V}[k] |$. 
A directed edge from node $v_j$ to node $v_l$ is denoted by $m_{lj} \triangleq (v_l, v_j) \in \mathcal{E}[k]$, 
and the cardinality of edges at each time step $k$ is denoted as $m[k] = | \mathcal{E}[k] |$. 
The set of all possible edges in the network is denoted as $\mathcal{E} = \cup_{k=0}^{\infty} \mathcal{E}[k]$. 
The set of in-neighbors of node $v_j$ is denoted as $\mathcal{N}^{-}_j[k] = \{v_i \in \mathcal{V}[k] \ | \ (v_j, v_i) \in \mathcal{E}[k]\}$, and the set of out-neighbors as $\mathcal{N}^{+}_j[k] = \{v_l \in \mathcal{V}[k] \ | \ (v_l, v_j) \in \mathcal{E}[k]\}$. 
Note that $\mathcal{N}^{-}_j[k] = \mathcal{N}^{-,o}_j[k] \cap \mathcal{N}^{-,d}_j[k]$, and $\mathcal{N}^{+}_j[k] = \mathcal{N}^{+,o}_j[k] \cap \mathcal{N}^{+,d}_j[k]$ (where $\mathcal{N}^{-,o}_j[k]$, $\mathcal{N}^{+,o}_j[k]$ were defined in Section~\ref{subsec_open_networks}, and $\mathcal{N}^{-,d}_j[k]$, $\mathcal{N}^{+,d}_j[k]$ were defined in Section~\ref{subsec_dynam_networks}). 
% The in-degree of $v_j$ is denoted as $D^{-}_j[k] = | \mathcal{N}^{-}_j[k] |$. 
% The out-degree of $v_j$ is denoted as $D^{+}_j[k] = | \mathcal{N}^{+}_j[k] |$. 
The in-degree and out-degree of $v_j$ are denoted as $D^{-}_j[k] = | \mathcal{N}^{-}_j[k] |$, and $D^{+}_j[k] = | \mathcal{N}^{+}_j[k] |$, respectively. 

Since we have $\mathcal{V}[k] = \mathcal{V}^{o}[k]$ at every time step $k$, 
in open dynamic networks nodes are categorized into operational modes similar to those in open networks. 
This categorization was described in Section~\ref{subsec_open_networks}. 
Specifically, at each time step $k$, every node belongs to one of the three following subsets: 
Remaining (denoted as $\mathcal{R}[k] = \mathcal{R}^o[k]$), 
Arriving (denoted as $\mathcal{A}[k] = \mathcal{A}^o[k]$), or 
Departing (denoted as $\mathcal{D}[k] = \mathcal{D}^o[k]$). 
The definitions of the aforementioned three subsets are analogous to equations \eqref{remain_set_defn}, \eqref{arrive_set_defn}, and \eqref{depart_set_defn}, respectively. 
Nodes transition between these three subsets as in open networks.
The main difference in open dynamic networks (compared to open networks) lies in the behavior of communication links. 
The links between active nodes also change over time. 
This introduces additional challenges for analyzing the network structure and designing distributed algorithms that operate over it.

\section{Problem Formulation}\label{sec:probForm}

In this paper, we aim to address problem \textbf{P1} below. 

% \todo{CHANGE $y^{init}_j$ TO $x_j$ EVERYWHERE}

% \todo{CHANGE $z^{init}_j$ TO $r_j$ EVERYWHERE}

\textbf{P1.} 
Let us consider an open dynamic network $\mathcal{G}_d[k] = (\mathcal{V}[k], \mathcal{E}[k])$ (defined in Section~\ref{subsec_open_dynam_networks}). 
Each node $v_j \in \mathcal{V}$ has an initial quantized state $x_j$ (for simplicity we have $x_j \in \mathbb{Z}$). 
Let us define $q[k]$ as the real average of the initial states of active nodes at time step $k$, given by 
\begin{equation}\label{goal}
    q[k] = \frac{1}{n[k]}\sum\limits_{v_j \in \mathcal{V}[k]} x_j ,  
\end{equation}
where $n[k]$ is the number of active nodes, and $\mathcal{V}[k]$ is the set of active nodes, at time step $k$. 
Our goal is to develop a distributed algorithm that enables active nodes to compute in finite time a quantized state equal to either the floor or ceiling of the real average $q[k]$ of the initial states (as defined in~\eqref{goal}).
Specifically, we require that there exists $k_0$ so that for every active node $ v_j \in \mathcal{V}[k]$ we have 
\begin{equation}\label{alpha_q_no_oscill}
( q^s_j[k] \!= \!\lfloor q[k] \rfloor \ \ \text{for} \ \ k \geq k_0 ) \ \ \text{or} \ \ ( q^s_j[k] \!=\! \lceil q[k] \rceil \ \ \text{for} \ \ k \geq k_0).
\end{equation}
Additionally, for enhancing operational and resource efficiency, nodes are required to communicate by exchanging quantized valued messages.

% \begin{remark}
%     \todo{mention Remark 1 similar as Remark 1 in Rikos Automatica Non oscillating?} 
    
%     \todo{Jiaqi: I think it is similar.}
% \end{remark}

% \textbf{P1.} 
% Given a open dynamic digraph $\mathcal{G}_d(k)=(\mathcal{V}(k),\mathcal{E}(k))$ which is jointly strong connected (e.g., for some finite $l$ we have that the union graph $\mathcal{G}_d^{ml, ml+1, ml+2, ..., ml+l-1}$ is strongly connected for all $m=0,1,2,...$) when $k\geq k^\prime$, where $k^\prime$
% is the time step that the open dynamic digraph becomes stable (i.e., no node arrives or leaves), and design a algorithm which allows the active nodes $v_j\in\mathcal{V}(k)$ collaboratively compute the time-varying quantized average $q^s(k)$ , which is equal to the ceiling or the floor of the actual average $q(k)$ of their initial states defined in (\ref{goal}), in finite time steps. 
% Specifically, we require that there exists $k_0$ so that for every $v_j\in\mathcal{V}(k)$ we have 
% \begin{equation}\label{eq:closedform}
%     \!(q^s_j[k]\!=\!\lfloor q(k) \rfloor \, \!for\\, k\!\geq\! k_0) \, \!or\! \, (q^s_j[k]\!=\!\lceil q(k) \rceil \,\!for \, k\!\geq\! k_0\!).
% \end{equation}

% ===============================================
%
%
% ALGORITHM
%
%
% ===============================================
\section{Quantized Averaging in Open Dynamic Networks} 
\label{sec:distr_algo}

In this section we present a distributed algorithm that addresses problem \textbf{P1} in Section~\ref{sec:probForm}. 
Our algorithm is detailed below as Algorithm~\ref{algorithm1}. 
Before we present its main functionalities, we establish the following set of assumptions, which are necessary for our subsequent development. 

% \begin{assumption}\label{awareness_of_remaining_out_neighbors}
% At every time step $k \geq 0$, each active node $v_j \in \mathcal{V}[k]$ is aware of its remaining out-neighbors (i.e., out-neighbors that will stay active in the next time step). 
% Specifically, each active node $v_j \in \mathcal{V}[k]$ is aware of the nodes in the set $\mathcal{N}^{+}_j[k] \cap \mathcal{R}[k]$, at every time step $k \geq 0$. 
% \end{assumption} 

\begin{assumption}\label{awareness_of_remaining_out_neighbors}
At every time step $k \geq 0$, each active node $v_j \in \mathcal{V}[k]$ knows the set of its remaining out-neighbors $\mathcal{N}^{+}_j[k] \cap \mathcal{R}[k]$ (i.e., out-neighbors that will stay active in the next time step). 
\end{assumption} 

% \begin{assumption}\label{existence_stable_time_step} 
% There exists a time step $k'$ such that the subset of active nodes in the open dynamic network stabilizes for $k \geq k'$. 
% Specifically, there exists a time step $k'$ such that 
% \begin{equation}\label{stable_nodepart_eq}
%     \mathcal{V}[k] = \mathcal{V}_{\mathcal{R}}, \ \forall k \geq k' , 
% \end{equation} 
% where $\mathcal{V}_{\mathcal{R}}$ is the set of remaining nodes for time steps $k \geq k'$. 
% \end{assumption} 

\begin{assumption}\label{existence_stable_time_step} 
There exists a time step $k'$ such that 
\begin{equation}\label{stable_nodepart_eq}
    \mathcal{V}[k] = \mathcal{V}_{\mathcal{R}}, \ \forall k \geq k' , 
\end{equation} 
where $\mathcal{V}_{\mathcal{R}}$ is the set of remaining nodes for time steps $k \geq k'$ (i.e., the subset of active nodes in the open dynamic network stabilizes for time steps $k \geq k'$). 
\end{assumption}

\begin{assumption}[$T$-jointly strong connectivity]\label{strong_connectivity_stable_union_graph}
Let us consider an open dynamic network $\mathcal{G}_d[k] = (\mathcal{V}[k], \mathcal{E}[k])$. 
For $k \geq k'$, each $\mathcal{G}_d[k]$ takes a value among a finite set of instances $\{ \mathcal{G}_{d_1}$, $\mathcal{G}_{d_2}$, ..., $\mathcal{G}_{d_T} \}$, where $\mathcal{G}_{d_\theta} = (\mathcal{V}_{\mathcal{R}}, \mathcal{E}_{d_\theta})$ for some $\theta \in \{1, 2, ..., T\}$, and $T \in \mathbb{N}$ (note that $\mathcal{V}_{\mathcal{R}}$ was defined in Assumption~\ref{existence_stable_time_step}). 
Specifically, at each time step $k \geq k'$, we have $\mathcal{G}_d[k] = \mathcal{G}_{d_{\theta}}$ for some $\theta \in \{1, 2, ..., T\}$ with probability $p_\theta > 0$ where $\sum_{\theta = 1}^T p_\theta = 1$, (i.e., at each time step $k  \geq k'$ one such topology $G_{d_{\theta}}$ is selected independently in an i.i.d. manner). 
Furthermore, let us define the \textit{virtual union digraph} $\mathcal{G}_d^\prime = (\mathcal{V}_{\mathcal{R}}, \cup_{i=1}^{T} \mathcal{E}_{d_i})$. 
The virtual union digraph $\mathcal{G}_d^\prime$ is strongly connected. 
For $k \geq k'$, the union digraph defined as $\mathcal{G}^{ \{ 1, 2, ..., T \}}_d = ( \mathcal{V}_{\mathcal{R}}, \cup_{k= k' + \beta}^{k' + \beta+T-1} \mathcal{E}[k] )$, where $\beta \in \mathbb{Z}_0$, is equal to the virtual union digraph $\mathcal{G}_d^\prime$ which is strongly connected. 
This property means that the open dynamic network $\mathcal{G}_d[k]$ is $T$-\textit{jointly strongly connected} for $k \geq k'$.  
\end{assumption}

Assumption~\ref{awareness_of_remaining_out_neighbors} is important for guaranteeing that each node will perform a transmission towards a node that is active and is participating in the operation of the algorithm, and not to a node that is inactive or departing (i.e., it will be inactive in the next time step). 
Furthermore, it also guarantees that if one node decides to depart (i.e., become inactive at the next time step), it can transmit its stored information to an active node so that the information is not lost after its departure. 
In a network that exhibits directed communication, knowledge of $\mathcal{N}^{+}_j[k] \cap \mathcal{R}[k]$ is challenging but there are ways in which this might be possible. 
One potential approach involves the use of a ``distress signal'' - a special tone transmitted in a control slot or separate channel. 
This signal is sent at higher power than normal communications, enabling it to reach transmitters in its vicinity \cite{2000:bambos_channel}. 
Nodes can gain knowledge of their out-degree by conducting periodic checks. 
These checks might involve transmitting the aforementioned distress signals at regular intervals to identify and count active out-neighbors. 
Such mechanisms allow nodes to maintain awareness of their network connections, even in a dynamic, directed communication environment. 
Another method employs acknowledgment messages, which are common in protocols like TCP, ARQ/HARQ, and ALOHA (see \cite{2024:CDC_Themis_Open}). 
These messages serve to confirm the receipt of information and help overcome channel errors. 
They play a crucial role in ensuring reliable transmissions over unreliable channels. 
Typically, acknowledgment messages are narrowband signals sent via feedback channels. 
This characteristic means they can coexist with the directional data channel without necessarily causing interference.

% For example, there can be a binary signal $a_j(k)$, which is assumed to be narrowband signal communicated over feedback channel, thus, its presence does not necessarily conflict with the directional data channel. 
% Furthermore, we assume that each binary signal transmission is associated with a unique ID that allows the receiving node to identify the sender of the signal. 
% Thus, it is possible for each active node $v_j\in\mathcal{V}(k)$ to get the knowledge of the remaining out-neighbors $N^+ _j(k)\cap\mathcal{R}(k)$  if each out-neighbor $v_l\in N^+_j(k)$ send binary signal $a_l(k)$ to $v_j$ at the time step $k$. Specifically, $a_l(k)$ equals to '0' if $v_l(k)\in \mathcal{D}(k)$, and $a_l(k)$ equals to '1' if $v_l(k)\in \mathcal{R}(k)$. What's more, it's straightforward to get $|N^+ _j(k)\cap\mathcal{R}(k)|=\sum\limits_{v_l\in N^+ _j(k)}a_l(k)$. 

Assumption~\ref{existence_stable_time_step} ensures that the average of the initial states of active nodes will not change for time steps $k \geq k'$ and the nodes will be able to calculate it by executing our proposed algorithm. 
Note that this is the most common case for analyzing open networks, as they are typically assumed to become eventually closed~\cite[Remark~2]{deplano2025optimization}.
However, our algorithm can also be adapted to settings where the network remains perpetually dynamic by employing an mechanism for departing nodes that relies on memory-based consensus strategies.

Assumption~\ref{strong_connectivity_stable_union_graph} guarantees the existence of at least one directed path between any pair of nodes infinitely often for time steps $k \geq k'$. 
This assumption ensures that information will propagate from each active node towards each other active node in the network 
$\mathcal{G}_d[k] = (\mathcal{V}_{\mathcal{R}}, \mathcal{E}[k])$. 

\begin{remark}
    Current algorithms in the literature operating in OMAS with directed communication links require the network to be strongly connected at all time steps $k \geq 0$ (see \cite{2024:CDC_Hadjic_Garcia, 2024:CDC_Themis_Open}). 
    It is important to note that our paper relaxes this assumption. 
    Specifically, our open dynamic network $\mathcal{G}_d[k]$ does not need to be strongly connected at all time steps $k$. 
    We only require its union digraph $\mathcal{G}^{ \{ 1, 2, ..., T \}}_d$ to be $T$-jointly strongly connected for time steps $k \geq k'$, in order to ensure convergence of our proposed distributed algorithm (see Assumption~\ref{strong_connectivity_stable_union_graph}). 
    This relaxation broadens the applicability of our algorithm to more realistic scenarios where network connectivity may be intermittent or time-varying, while still guaranteeing convergence under weaker connectivity conditions.  
\end{remark}

\subsection{Distributed Quantized Averaging Algorithm in Open Dynamic Networks}\label{subsec_distr_alg_open_dynamic}

% We now introduce our distributed algorithm that enables nodes to address problem \textbf{P1} in Section~\ref{sec:probForm}. 
% We start by analyzing below the main functionalities of each node during our proposed algorithm. 

Let us consider an open dynamic network $\mathcal{G}_d[k] = (\mathcal{V}[k], \mathcal{E}[k])$. 
Each node $v_j \in \mathcal{V}^\prime$ has an initial quantized state $x_j \in \mathbb{Z}$. 
Additionally, at time step $k$, each node $v_j$ maintains the mass variables $y_j[k]$, $z_j[k]$, the state variables $y_j^s[k]$, $z_j^s[k]$, $q_j^s[k]$, and the transmission variables $c_{lj}^y[k]$, $c_{lj}^z[k]$ at each time step $k$. 
The mass variables are utilized to perform computations on the stored information, the state variables are utilized to store the received information and calculate \eqref{goal}, and the transmission variables are utilized to transmit messages to other nodes.  
During the operation of our algorithm, each node executes the following operations. 

% \textbf{Transmission Strategy.} \AR{OK}
% During every time step $k$, each active node $v_j$ assigns to each of its outgoing edges $m_{lj}$ (including a virtual self-edge) a nonzero probability $b_{lj}[k]$ equal to $(1 + |\mathcal{N}^{+}_j[k] \cap \mathcal{R}[k]|)^{-1}$, if $v_l \in \mathcal{N}^{+}_j[k] \cap \mathcal{R}[k] \cup \{v_j\}$. 
% If $v_l \notin \mathcal{N}^{+}_j[k] \cap \mathcal{R}[k] \cup \{v_j\}$, then $v_j$ sets $b_{lj}[k] = 0$. 
% Then, each active node $v_j$ updates its state variables and transmission variables. 
% Specifically, it splits $y_j[k]$ into $z_j[k]$ equal pieces, keeps one piece for itself, and transmits the other pieces to randomly chosen out-neighbors or itself according to the assigned nonzero probabilities $b_{lj}[k]$ above. 
% \todo{THIS LAST PART IS USEFUL?} 

\textbf{Remaining Strategy.} 
During every time step $k$, each active node with remaining status $v_j \in \mathcal{R}[k]$ assigns to each of its outgoing edges $m_{lj}$ (including a virtual self-edge) a nonzero probability  
\begin{equation}\label{eq:remain_trans_prob}
     b_{lj}[k] \hspace{-.1cm} = \hspace{-.1cm} \left\{ 
\begin{aligned} 
\frac{1}{1 + |\mathcal{N}^{+}_j[k] \cap \mathcal{R}[k]|} , \ \hspace{-.05cm} & v_l \in (\mathcal{N}^{+}_j[k] \cap \mathcal{R}[k]) \cup \{v_j\}, \\
0, & \ \text{otherwise}.  
\end{aligned}
\right.
\end{equation}
Then, it updates its state variables to be equal to the mass variables and also updates its transmission variables. 
For updating its transmission variables $c_{lj}^y[k]$, $c_{lj}^z[k]$, it splits $y_j[k]$ into $z_j[k]$ equal pieces, keeps one piece for itself and transmits the other pieces to randomly chosen out-neighbors or itself according to the assigned nonzero probabilities $b_{lj}[k]$ above. 
Then, $v_j$ receives the transmission variables from every in-neighbor $v_i \in \mathcal{N}^{-}_j[k]$, and updates its mass variables $y_j[k+1]$, $z_j[k+1]$ as follows:
\begin{equation}\label{eq:mass_var_update}
\begin{aligned}
    y_j[k+1] = & c_{jj}^y[k] + \sum\limits_{v_i \in \mathcal{N}^{-}_j[k]} w_{ji}[k] \ c_{ji}^y[k] \\
    z_j[k+1] = & c_{jj}^z[k] + \sum\limits_{v_i \in \mathcal{N}^{-}_j[k]} w_{ji}[k] \ c_{ji}^z[k]
    \end{aligned}
\end{equation}
where $w_{ji}=1$ if node $v_j$ receives $c_{ji}^y[k]$, $c_{ji}^z[k]$ from $v_i \in \mathcal{N}^{-}_j[k]$ at iteration $k$ (otherwise $w_{ji}[k] = 0$). 
Note that a more detailed description of the operation performed by the remaining nodes $v_j \in \mathcal{R}[k]$ is shown in \cite[Algorithm~$1$]{2022:Rikos_Hadj_Johan}.

% At the time step $k$, each remaining node $v_j\in\mathcal{R}(k)$ calculates and sends the transmission variables $c_{lj}^y(k)$, $c_{lj}^z(k)$ to the selected remaining out-neighbors or itself according (\ref{transmission probability}). 
% At the same time, upon receiving the transmission variables from the in-neighbors of $v_i$, each node $v_j\in\mathcal{R}(k)$ updates its mass variables  $y_j(k+1)$, $z_j(k+1)$ as follows:
% \begin{equation}\label{eq:mass_var_update}
% \begin{aligned}
%     y_j(k+1) = &c_{jj}^y(k)+\sum\limits_{v_i\in\mathcal{N}_j^-(k)} w_{ji}[k] \ c_{ji}^y(k) \\
%     z_j(k+1) = & c_{jj}^z(k)+\sum\limits_{v_i\in\mathcal{N}_j^-(k)} w_{ji}[k] \ c_{ji}^z(k) 
%     \end{aligned}
% \end{equation}
% where $w_{ji}=1$ if node $v_j$ receives $c_{ji}^y(k)$, $c_{ji}^z(k)$ from $v_i\in\mathcal{N}^-_j(k)$ at iteration $k$ (otherwise $w_{ji}(k)=0$). 

% However, there are arriving and departing nodes at the time step $k$ in open network. 
% Appropriate dealing with the arriving and departing information is crucial to ensure the correctness of calculating the quantized average consensus state in the open dynamic network. 
% In what follows, we design the arriving, departing and remaining strategy from the perspective of node $v_j$ at the time step $k$.

\textbf{Arriving Strategy.} 
When node $v_j$ arrives in the network at time step $k$ (i.e., $v_j \in \mathcal{A}[k]$), it simply updates its state $y_j^s[k+1]$, $z_j^s[k+1]$, $q_j^s[k+1]$, and mass variables $y_j[k+1]$, $z_j[k+1]$, as follows:
\begin{equation}\label{init_variables}
\begin{aligned}
    y_j[k+1] &= 2x_j, \ z_j[k+1] = 2r_j, \\
    y_j^s[k+1] &= 2x_j, \ z_j^s[k+1] = 2r_j, \\
    q_j^s[k+1] &= \left\lfloor \frac{y_j^s[k+1]}{z_j^s[k+1]} \right\rfloor , 
     % q_j^s[k+1] &= \lfloor \ y_j^s[k+1] / z_j^s[k+1] \ \rfloor , 
\end{aligned}
\end{equation}
where $r_j = 1$. 
Then, the node $v_j$ starts interacting with its active neighbors at the next time step. 

\textbf{Departing Strategy.} 
When node $v_j$ departs from the network at time step $k$ (i.e., $v_j \in \mathcal{D}[k]$), it assigns to each of its outgoing edges $m_{lj}$ where $v_l \in \mathcal{N}^{+}_j[k] \cap \mathcal{R}[k]$, a nonzero probability value $b_{lj}[k]$ as follows: 
\begin{equation}\label{eq:depart_trans_prob}
     b_{lj}[k]=\left\{ 
\begin{aligned} 
\frac{1}{|\mathcal{N}^{+}_j[k] \cap \mathcal{R}[k]|} , & \ v_l \in \mathcal{N}^{+}_j[k] \cap \mathcal{R}[k], \\
0, & \ v_l \notin \mathcal{N}^{+}_j[k] \cap \mathcal{R}[k]. 
\end{aligned}
\right.
\end{equation}
Then it randomly selects a \textit{remaining} out-neighbor $v_l \in \mathcal{N}^{+}_j[k] \cap \mathcal{R}[k]$ according $b_{lj}[k]$ and transmits towards $v_l$ its negative two times initial state $-2x_j$ and $-2r_j$ combined with its mass variables $y_j[k]$ and $z_j[k]$. 
Specifically, the computation of transmission variables $c_{lj}^y[k]$, $c_{lj}^z[k]$ is as follows:
\begin{equation}\label{trans_vari}
\begin{aligned}
    &c_{lj}^y[k] = y_j[k] - 2x_j,\\
    &c_{lj}^z[k] = z_j[k] - 2r_j. 
\end{aligned}
\end{equation}

% \begin{remark} \todo{MAYBE REMOVE?}
% Algorithm \ref{algorithm1} also works for open static network $\mathcal{G}_s(k)$, where remaining nodes don't change edges compared with open dynamic network, if  the digraph $\mathcal{G}_s(k) = (\mathcal{V}(k),\mathcal{E}(k))$ is strongly-connected at all time steps $k\geq k^\prime$, where $k^\prime$ is the stable time step. 
% \end{remark}

% \todo{say that nodes start from arriving mode. then they change status depending on ... etc. this is summarized below}
In Algorithm~\ref{algorithm1} below we provide a summary of a single step of our proposed procedure. 
This overview outlines all operational modes for each node $v_j$. 

\begin{varalgorithm}{1}
\caption{Quantized Averaging in Open Dynamic Networks}
\noindent \textbf{Input.} 
An open dynamic network $\mathcal{G}_d[k]=(\mathcal{V}[k], \mathcal{E}[k])$ with $n = | \mathcal{V}^\prime |$ nodes potentially participating, $n[k] = | \mathcal{V}[k] |$ active nodes, and $m[k] = | \mathcal{E}[k] |$ edges at each time step $k$. 
Each potentially participating node $v_j \in \mathcal{V}$ has initial state $x_j \in \mathbb{Z}$, and $r_j = 1$. 
Also, Assumptions~\ref{awareness_of_remaining_out_neighbors}, \ref{existence_stable_time_step}, \ref{strong_connectivity_stable_union_graph} hold. 
\\
\textbf{Initialization.} 
Each node $v_j \in \mathcal{V}$ sets $y_j[0] = 2x_j$, $z_j[0] = 2r_j$. \\ 
%Sets $y_j[0] := \pi^{\mathrm{upper}} (l_j + u_j)$, $z_j[0] = \pi_j^{\max}$, and $\text{flag}_j = 0$. 
\textbf{Iteration.} For each time step $k= 0, 1, 2, \dots$\\
\textbf{Arriving:} Each node $v_j \in \mathcal{A}[k]$ updates its state $y_j^s[k+1]$, $z_j^s[k+1]$, $q_j^s[k+1]$, and mass variables $y_j[k+1]$, $z_j[k+1]$ as in \eqref{init_variables}. \\
\textbf{Departing:} Each node $v_j \in \mathcal{D}[k]$ does:
\begin{list4}
\item[$1)$] Assigns a nonzero probability $b_{lj}[k]$ to each of its outgoing edges $m_{lj}$ 
% where $v_l \in \mathcal{N}^{+}_j[k] \cap \mathcal{R}[k]$ 
as in \eqref{eq:depart_trans_prob}. 
\item[$2)$] Chooses randomly one remaining out-neighbor $v_l \in \mathcal{N}^{+}_j[k] \cap \mathcal{R}[k]$ according to probability $b_{lj}[k]$. 
\item[$3)$] Computes transmission variables $c_{lj}^y[k]$, $c_{lj}^z[k]$ as in \eqref{trans_vari}. 
\item[$4)$] Transmits $c_{lj}^y[k]$, $c_{lj}^z[k]$ to the selected remaining out-neighbor $v_l$. 
\end{list4} 
\textbf{Remaining:} Each node $v_j \in \mathcal{R}[k]$ does:
\begin{list4} 
\item[$1)$] Assigns to each of its outgoing edges $m_{lj}$ (including a virtual self-edge) a nonzero probability $b_{lj}[k]$ as in $\eqref{eq:remain_trans_prob}$. 
\item[$2)$] Sets $y_j^s[k] = y_j[k], z_j^s[k] = z_j[k], q_j^s[k] = \lfloor\frac{y_j^s[k]}{z_j^s[k]}\rfloor$. 
\item[$3)$] Sets $c_{lj}^y[k] = 0$, $c_{lj}^z[k] = 0$, for every $v_l \in (\mathcal{N}^{+}_j[k] \cap \mathcal{R}[k]) \cup \{v_j\}$. 
\item[$4)$] Sets $\delta_z = z_j[k]$. 
\item[$5)$] \textbf{If} $\delta_z \leq 1$ \textbf{set} $c_{jj}^y[k] = y_j[k]$, $c_{jj}^z[k] = z_j[k]$. 
\item[$6)$] \textbf{While} $\delta_z > 1$ \textbf{do} 
\begin{list4a}
\item[$6.1)$] $\delta_y = \lfloor y[k] / z[k] \rfloor$.
\item[$6.2)$] Chooses $v_l \in (\mathcal{N}^{+}_j[k] \cap \mathcal{R}[k]) \cup \{v_j\}$ according to $b_{lj}[k]$.  
\item[$6.3)$] Sets $c_{lj}^y[k] = c_{lj}^y[k] + \delta_y$, and $c_{lj}^z[k] = c_{lj}^z[k] + 1$ for chosen $v_l$ in previous step.   
\item[$6.4)$] Sets $y_j[k] = y_j[k] - \delta_y$, $z_j[k] = z_j[k] - 1$, $\delta_z = \delta_z - 1$. 
\end{list4a}
\item[$7)$] Transmits $c_{lj}^y[k]$ and $c_{lj}^z[k]$, to $v_l$ for every $v_l \in (\mathcal{N}^{+}_j[k] \cap \mathcal{R}[k]) \cup \{v_j\}$. 
\item[$8)$] Receives $c_{lj}^y[k]$, $c_{lj}^z[k]$ from in-neighbor $v_i \in \mathcal{N}^{-}_j[k]$ and updates $y_j[k+1]$, $z_j[k+1]$ as in \eqref{eq:mass_var_update}. 
\end{list4}
\textbf{Output:} \eqref{alpha_q_no_oscill} holds for every active node $v_j \in \mathcal{V}[k]$. 
\label{algorithm1}
\end{varalgorithm}

\textbf{Intuition.} 
The operation of Algorithm \ref{algorithm1}, at any time step $k$, can be interpreted as the "random walk" of $n[k]$ "tokens" in a \textit{dynamic} (inhomogeneous) Markov chain (i.e., interconnections change over time) with $n[k]=|\mathcal{V}[k]|$ states. 
Note that a detailed analysis of the operation of the underlying averaging algorithm is presented in \cite[Theorem~$1$]{2022:Rikos_Hadj_Johan}. 
Each active node $v_j$ holds two ``tokens'': $T_j^{ins}$ (which is stationary) and $T_j^{out}$ (which performs a random walk).
They each contain a pair of values $y_j^{ins}[k]$, $z_j^{ins}[k]$, and $y_j^{out}[k]$, $z_j^{out}[k]$, respectively, for which it holds that $y_j^{ins}[0] = y_j^{out}[0] = x_j \in\mathbb{Z}$ and $z_j^{ins}[0] = z_j^{out}[0] = r_j=1$. 
The sum of $y^{ins}_j[k],y^{out}_j[k]$ of active nodes is equal to $2x_j$,  and the sum of $z^{ins}_j[k],z^{out}_j[k]$ of active nodes is equal to $2n[k]$ at each time step $k$.
The operations of each mode can then be interpreted as follows. 
 \\ \noindent 
\textit{Remaining.} At each time step $k$, each node $v_j$ keeps the token $T_j^{ins}$ (i.e., it never transmits it) and transmits the token $T_j^{out}$.
When $v_j$ receives one or more tokens $T_i^{out}$ from its in-neighbors, the values $y_i^{out}[k]$ and $y_j^{ins}[k]$ become equal (or differ by at most~$1$). 
Then, $v_j$ transmits each received token $T_i^{out}$ to a randomly selected out-neighbor or itself. 
 \\ \noindent 
\textit{Arriving.} Each arriving node $v_j \in \mathcal{A}[k]$ at time step $k$ also holds two ``tokens'' $T^{ins}_j$ (which is stationary) and $T^{out}_j$ (which performs a random walk). 
These tokens contain pairs of values $y^{ins}_j[k+1]$, $z^{ins}_j[k+1]$, and $y^{out}_j[k+1]$, $z^{out}_j[k+1]$, respectively, for which it holds that $y^{ins}_j[k+1] = y^{out}_j[k+1] = x_j$ and $z^{ins}_j[k+1] = z^{out}_j[k+1] = r_j$. 
 \\ \noindent 
\textit{Departing.} When each departing node $v_j\in\mathcal{D}[k]$ leaves the network, the initial tokens of the departing nodes should be removed, and the initial tokens of the remaining nodes should be preserved. 
Therefore, each departing node transmits its negative values of tokens of time step $k=0$ (i.e., $-(y^{out}_j[0] + y^{ins}_j[0] ) = -2 x_j $, and $-(z^{out}_j[0] + z^{ins}_j[0]) = -2 r_j $) for eliminating the value of its initial tokens.
It also transmits the values of tokens $T^{out}_j$, $T^{ins}_j$ it holds at the time step $k$, (i.e., transmits $y^{out}_j[k] + y^{ins}_j[k] = y_j[k]$, and $z^{out}_j[k] + z^{ins}_j[k] = z_j[k]$), for avoiding losing the values of other remaining nodes' initial tokens.

\subsection{Correctness Analysis of Algorithm~\ref{algorithm1}}

We now establish the correctness of our Algorithm~\ref{algorithm1} via the following theorem. 
Additionally, in our theorem we also provide a necessary and sufficient condition for nodes executing Algorithm~\ref{algorithm1} in an open dynamic network to solve problem \textbf{P1}. 
% \AR{Note that the proof of the theorem below is available in \cite[Theorem~1]{???}. 
% It will also be available in an extended version of our paper.} 

\begin{theorem}\label{main_convergence_condition_theorem}
Let us consider an open dynamic network $\mathcal{G}_d[k]=(\mathcal{V}[k], \mathcal{E}[k])$ with $n = | \mathcal{V}^\prime |$ nodes potentially participating, $n[k] = | \mathcal{V}[k] |$ active nodes, and $m[k] = | \mathcal{E}[k] |$ edges at each time step $k$.  
Assumptions~\ref{awareness_of_remaining_out_neighbors}, \ref{existence_stable_time_step}, \ref{strong_connectivity_stable_union_graph} hold. 
Let us also assume that all potentially participating nodes execute Algorithm~\ref{algorithm1}. 
There exists a time step $k_0$ such that for every active node $v_j \in \mathcal{V}[k]$ we have 
$$ 
( q^s_j[k] = \lfloor q[k] \rfloor \ \ \text{for} \ \ k \geq k_0 ) \ \text{or} \ ( q^s_j[k] = \lceil q[k] \rceil \ \ \text{for} \ \ k \geq k_0) , 
$$
where $q[k]$ is defined in \eqref{goal}, if and only if for every departing node $v_j \in \mathcal{D}[k]$ it holds that 
\begin{equation}\label{condition_for_correctness} 
    | \ \mathcal{N}^{+}_j[k] \cap \mathcal{R}[k] \ | \geq 1 \ , 
\end{equation} 
for every time step $k$. 
% Consider a sequence of digraphs $\mathcal{G}_d(k)=(\mathcal{V}(k),\mathcal{E}(k))$ with $n(k)=|\mathcal{V}(k)|$ nodes and $m(k)=|\mathcal{E}(k)|$, and $z_j(0)$, $y_j(0)$ for every potentially participating node $v_j\in\mathcal{V}$ at time step $k=0$ or arriving time step, and assumption \ref{Awareness of remaining out-neighbors}, \ref{The existence of the stable time step},  \ref{Strong connectivity of the union graph} hold for $\mathcal{G}_d(k)$ over all $k$.
% Suppose that each potentially participating node $v_j\in\mathcal{V}$ follows the Initialization and Iteration steps as described in Algorithm \ref{algorithm1}. 
% For any probability $p_0$, where $0<p_0<1$, there exists $k_0\in\mathbb{Z}_+$, so that with probability at least $p_0$ we have (\ref{eq:closedform}) after a finite number of time steps,  for every $v_j\in\mathcal{V}(k)$, where $q$ fulfills (\ref{goal}), if each departing node $v_j\in\mathcal{D}(k)$ has at least one remaining out-neighbor $v_l\in N^+ _j(k)\cap\mathcal{R}(k)$, i.e. $|N^+ _j(k)\cap\mathcal{R}(k)|\geq1$,  at time step $k$. 
% The number of required time steps for convergence $k_0$ depends on the network size and structure, which are related with arriving probability $p_a(k)$ and departing probability $p_d(k)$, and the nodes initial states, and is equal to ....
\end{theorem}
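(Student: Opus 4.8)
The plan is to prove both directions through a single conserved quantity: the aggregate mass held by the active nodes. I would define the total mass $Y[k] = \sum_{v_j \in \mathcal{V}[k]} y_j[k]$ and total weight $Z[k] = \sum_{v_j \in \mathcal{V}[k]} z_j[k]$, and track the invariant $Y[k] = 2\sum_{v_j \in \mathcal{V}[k]} x_j$ and $Z[k] = 2\,n[k]$. Sufficiency will show that condition \eqref{condition_for_correctness} preserves this invariant through every arrival and departure, after which the stabilized system reduces to the averaging scheme of \cite{2022:Rikos_Hadj_Johan}; necessity will show that violating \eqref{condition_for_correctness} forces a loss of mass that corrupts the invariant, so the limiting ratio no longer equals $q[k]$.

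\emph{Sufficiency.} I would verify the invariant mode-by-mode. An arriving node (cf.\ \eqref{init_variables}) injects $2x_j$ into $Y$ and $2$ into $Z$ while adding $x_j$ to $\sum x$ and $1$ to $n[k]$; a remaining node only splits and forwards its mass along edges of $\mathcal{G}_d[k]$ and through its self-loop, conserving $Y$ and $Z$ exactly (the mass-preservation property of \cite[Algorithm~1]{2022:Rikos_Hadj_Johan}, since it transmits only along present edges in $\mathcal{N}^{+}_j[k]\cap\mathcal{R}[k]$); and a departing node $v_j$ transmits \eqref{trans_vari}, where \eqref{condition_for_correctness} guarantees a nonempty $\mathcal{N}^{+}_j[k]\cap\mathcal{R}[k]$, so that $y_j[k]-2x_j$ and $z_j[k]-2$ are received by a node that stays active (correctly identified via Assumption~\ref{awareness_of_remaining_out_neighbors}), with the net effect of subtracting exactly $2x_j$ from $Y$ and $2$ from $Z$, matching the removal of $v_j$ from $\sum x$ and $n[k]$. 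Thus the invariant holds for all $k$. By Assumption~\ref{existence_stable_time_step} the active set freezes to $\mathcal{V}_{\mathcal{R}}$ for $k\geq k'$, where the dynamics reduce to the pure remaining strategy on a closed node set with $Y[k']/Z[k']=q[k]$; since Assumption~\ref{strong_connectivity_stable_union_graph} supplies the $T$-jointly-strongly-connected topology required by \cite[Theorem~1]{2022:Rikos_Hadj_Johan}, that result yields almost-sure finite-time convergence of each $q_j^s[k]$ to $\lfloor q[k]\rfloor$ or $\lceil q[k]\rceil$, which is \eqref{alpha_q_no_oscill}.

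\emph{Necessity.} I would argue the contrapositive. Suppose \eqref{condition_for_correctness} fails, i.e.\ there is a step $\tilde k$ and a departing node $v_j \in \mathcal{D}[\tilde k]$ with $\mathcal{N}^{+}_j[\tilde k] \cap \mathcal{R}[\tilde k] = \emptyset$. Then $v_j$ has no remaining out-neighbor to which \eqref{trans_vari} can be delivered, so the surplus $(\Delta_Y,\Delta_Z) = (y_j[\tilde k]-2x_j,\, z_j[\tilde k]-2)$ leaves the network together with $v_j$ and is irretrievably lost. Running the same bookkeeping as above, this makes $Y$ and $Z$ drop by more than the $2x_j$ and $2$ attributable to $v_j$'s own initial token, so the invariant is broken by exactly $(\Delta_Y,\Delta_Z)$. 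Because the union graph is jointly strongly connected and the out-token performs a randomized walk, with positive probability foreign mass has reached $v_j$ before $\tilde k$, so $(\Delta_Y,\Delta_Z)\neq(0,0)$ with positive probability; when this occurs the frozen totals at $k'$ satisfy $Y[k']/Z[k'] \neq q[k]$, and \cite[Theorem~1]{2022:Rikos_Hadj_Johan} then drives the states to the floor or ceiling of this corrupted ratio, which for suitable initial states differs from both $\lfloor q[k]\rfloor$ and $\lceil q[k]\rceil$, contradicting \eqref{alpha_q_no_oscill}. Hence \eqref{alpha_q_no_oscill} cannot hold, completing the contrapositive.

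\emph{Main obstacle.} The delicate part is the necessity direction: I must argue that the leaked pair $(\Delta_Y,\Delta_Z)$ is genuinely nonzero rather than accidentally zero—for instance, a node that departs before receiving any foreign token carries away exactly its own mass and leaks nothing—so the argument must combine the randomized walk with joint strong connectivity to guarantee that foreign mass accumulates at $v_j$ prior to its departure, and it must quantify over initial states so that the corrupted ratio actually rounds away from $\{\lfloor q[k]\rfloor,\lceil q[k]\rceil\}$. A secondary but necessary point is the conservation bookkeeping when several nodes arrive and depart in the same time step and when a departing node forwards into a node simultaneously handling other transmissions, which I would settle by summing the per-edge updates \eqref{eq:mass_var_update} over all active nodes and verifying the telescoping cancellation of incoming and outgoing pieces.
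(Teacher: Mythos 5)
Your proposal follows essentially the same route as the paper's own proof. Your invariant $Y[k]=2\sum_{v_j\in\mathcal{V}[k]}x_j$, $Z[k]=2n[k]$ is exactly the paper's token-sum preservation property \eqref{sum_preserve}; your mode-by-mode bookkeeping (an arrival injecting $2x_j$ and $2$ via \eqref{init_variables}, a departure under \eqref{condition_for_correctness} subtracting exactly $2x_j$ and $2$ via \eqref{trans_vari}) reproduces the paper's chain \eqref{eq:equation_arri}--\eqref{eq:equation_k+1}; and your endgame --- once Assumption~\ref{existence_stable_time_step} freezes the active set, invoke \cite[Theorem~1]{2022:Rikos_Hadj_Johan} under the $T$-joint strong connectivity of Assumption~\ref{strong_connectivity_stable_union_graph} --- is precisely how the paper concludes, also by citation rather than by reproving convergence.

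The one place the two arguments differ is the necessity direction, and there you are in fact more explicit than the paper: the paper disposes of necessity in two sentences (the departing node's information ``will be lost,'' and this ``might include crucial data \dots which could be essential''), never constructing a failure instance, whereas you attempt a contrapositive through the leaked pair $(\Delta_Y,\Delta_Z)$. The obstacle you flag is genuine and is not resolved by the paper either: a node that departs holding exactly its initial mass (for instance, departing before its out-token ever leaves and before any foreign token arrives, so $y_j[\tilde k]=2x_j$, $z_j[\tilde k]=2$) leaks nothing, and convergence then still holds, so the strict pathwise ``only if'' cannot be proved without a probabilistic or worst-case qualification (failure with positive probability over the random walks, for suitable initial states) --- exactly the qualification you sketch but do not complete. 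In short, your sufficiency direction matches the paper step for step, and your necessity direction, while incomplete, is no weaker than the published one; finishing it rigorously would require the positive-probability construction you outline, which the paper itself also omits.
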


\begin{proof}
The structure of our proof is the following. 
Initially, we will analyze the operation of the underlying quantized averaging algorithm (borrowing some notation from \cite[Theorem~$1$]{2022:Rikos_Hadj_Johan}). 
Then, we will focus on the operation of Algorithm~\ref{algorithm1} for time steps $k < k'$ (where $k'$ was defined in Assumption \ref{existence_stable_time_step}) showing how the summation of the initial states of the active nodes is preserved during possible departure and arrival operations from every potentially participating node. 
Finally, for time steps $k \geq k'$, we will show how Algorithm~\ref{algorithm1} enables active nodes to address problem \textbf{P1} in Section~\ref{sec:probForm} (i.e., active nodes are able to calculate the quantized average of their initial states in finite time while exchanging quantized valued messages). 

The operation of Algorithm~\ref{algorithm1}, at any time step $k$, can be interpreted as the ``random walk'' of $n$ ``tokens'' in a \textit{dynamic} (inhomogeneous) Markov chain (i.e., interconnections change over time) with $n[k] = | \mathcal{V}[k] |$ states (i.e., the states of the Markov chain are equal to the number of active nodes at time step $k$). 
Each active node $v_j$ at time step $k=0$ holds two ``tokens'': $T_j^{ins}$ (which is stationary) and $T_j^{out}$ (which performs a random walk).
They each contain a pair of values $y_j^{ins}[k]$, $z_j^{ins}[k]$, and $y_j^{out}[k]$, $z_j^{out}[k]$, respectively, for which it holds that $y_j^{ins}[0] = y_j^{out}[0] = x_j \in \mathbb{Z}$ and $z_j^{ins}[0] = z_j^{out}[0] = r_j = 1$. 
At each time step $k$, each node $v_j$ keeps the token $T_j^{ins}$ (i.e., it never transmits it) and transmits the token $T_j^{out}$, according to the nonzero probability $b_{lj}[k]$ it assigned to its outgoing edges $m_{lj}$. 
When $v_j$ receives one or more tokens $T_i^{out}$ from its in-neighbors, the values $y_i^{out}[k]$ and $y_j^{ins}[k]$ become equal (or differ by at most~$1$).
Then, $v_j$ transmits each received token $T_i^{out}$ to a randomly selected out-neighbor or itself according to the nonzero probability $b_{lj}[k]$. 
Note here that during the operation of Algorithm~\ref{algorithm1}, for every time step $k$ we have 
\begin{equation}\label{sum_preserve}
\sum_{v_j \in \mathcal{V}[k]} y^{out}_j[k] + \sum_{v_j \in \mathcal{V}[k]} y^{ins}_j[k] = 2 \sum_{v_j \in \mathcal{V}[k]} x_j ,
\end{equation}
% \begin{equation}\label{eq:equation}
%     \sum\limits_{v_j\in\mathcal{V}(k)}y^{out}_j(k)+\sum\limits_{v_j\in\mathcal{V}(k)}y^{ins}_j(k)=2\sum\limits_{v_j\in\mathcal{V}(k)}x_j, \forall k\in\mathbb{Z}_+ , 
% \end{equation} 
(i.e., at any given $k$ the sum of all token values in the network among active nodes remains constant, and is equal to twice the initial sum). 

Next, we analyze the effects of arriving and departing. 
We show that \eqref{sum_preserve} holds during every time step $k < k'$ regardless of nodes departing or arriving in the network. 
Note that our analysis focuses on time steps $k < k'$ since for time steps $k > k'$ the subset of active nodes stabilizes (see Assumption \ref{existence_stable_time_step}) and no further arrivals or departures occur. 

\textbf{Arriving:} 
When a node $v_j$ arrives at time step $k$, it initializes its state and mass variables according to \eqref{init_variables}.
Specifically, $v_j$ sets $y_j[k+1] = 2x_j$ and $z_j[k+1] = 2r_j$, and begins interacting with neighbors in the subsequent time step.
Each arriving node $v_j \in \mathcal{A}[k]$ at time step $k$ also holds two ``tokens'' $T^{ins}_j$ (which is stationary) and $T^{out}_j$ (which performs a random walk). 
These tokens contain pairs of values $y^{ins}_j[k+1]$, $z^{ins}_j[k+1]$, and $y^{out}_j[k+1]$, $z^{out}_j[k+1]$, respectively, for which it holds that $y^{ins}_j[k+1] = y^{out}_j[k+1] = x_j \in \mathbb{Z}$ and $z^{ins}_j[k] = z^{out}_j[k] = r_j = 1$. 
This means that at time step $k$ we have
\begin{equation}\label{eq:equation_arri}
    \sum\limits_{v_j \in \mathcal{A}[k]} y^{out}_j[k+1] + \hspace{-.25cm} \sum\limits_{v_j \in \mathcal{A}[k]} y^{ins}_j[k+1] = 2 \hspace{-.25cm} \sum\limits_{v_j \in \mathcal{A}[k]}x_j .
\end{equation} 
By combining \eqref{eq:equation_arri} with \eqref{sum_preserve}, we can conclude that \eqref{sum_preserve} continues to hold at time step $k+1$, maintaining the sum preservation property even as new nodes join the network.

\textbf{Departing:} 
Let us suppose that node $v_j$ departs from the network at time step $k$ and it has at least one remaining out-neighbor (i.e., $| \ \mathcal{N}^{+}_j[k] \cap \mathcal{R}[k] \ | \geq 1$). 
Node $v_j$ assigns nonzero probabilities $b_{lj}[k]$ as in \eqref{eq:depart_trans_prob}. 
It then computes the transmission variables as in \eqref{trans_vari}, and transmits them to one randomly chosen remaining out-neighbor. 
Following this transmission, $v_j$ exits the network and becomes inactive in the subsequent time step. 
It is important to note that if $v_j$ has no remaining out-neighbors (i.e., $| \ \mathcal{N}^{+}_j[k] \cap \mathcal{R}[k] \ | = 0$), it cannot compute $b_{lj}[k]$ or transmit to a remaining out-neighbor. 
In this case, when $v_j$ leaves the network the information it holds will be lost. 
This lost information might include crucial data about the initial states of other remaining nodes, which could be essential for solving problem \textbf{P1}. 
Note now that if the tokens of the departing nodes are removed from the network then we have  
\begin{equation}\label{eq:equation_depar}
\begin{aligned}
&\sum\limits_{v_j \in \mathcal{V}[k]} y^{out}_j[k] + \sum\limits_{v_j \in \mathcal{V}[k]} y^{ins}_j[k] \\ & -(\sum\limits_{v_j \in \mathcal{D}[k]} y^{out}_j[0] + \sum\limits_{v_j \in \mathcal{D}[k]} y^{ins}_j[0]) \\ & = 2(\sum\limits_{v_j \in \mathcal{V}[k]} x_j - \sum\limits_{v_j \in \mathcal{D}[k]} x_j), \ \forall \ k < k',
\end{aligned} 
\end{equation} 
where $y^{ins}_j[0] = y^{out}_j[0] = x_j \in \mathbb{Z}$. 
Before leaving the network, each departing node $v_j \in \mathcal{D}[k]$ transmits its negative values of tokens of time step $k=0$ (i.e., $-(y^{out}_j[0] + y^{ins}_j[0])$, $-(z^{out}_j[0] + z^{ins}_j[0])$, where $-(y^{out}_j[0] + y^{ins}_j[0] ) = -2 x_j = -y_j[0]$, and $-(z^{out}_j[0] + z^{ins}_j[0]) = -2 r_j = -z_j[0]$) for eliminating the departing nodes' information. 
In addition, each departing node $v_j$ also transmits the values of tokens $T^{out}_j$, $T^{ins}_j$ it holds at the time step $k$, (i.e., transmits $y^{out}_j[k] + y^{ins}_j[k]$, and $z^{out}_j[k] + z^{ins}_j[k]$, where $y^{out}_j[k] + y^{ins}_j[k] = y_j[k]$, and $z^{out}_j[k] + z^{ins}_j[k] = z_j[k]$), for avoiding losing the other remaining nodes' information. 
Therefore, by transmitting the transmission variables defined in \eqref{trans_vari} then \eqref{eq:equation_depar} holds true. 
We also have that \eqref{eq:equation_depar} is equivalent to 
\begin{equation}\label{eq:equation_depar_equa} 
    \sum\limits_{v_j \in \mathcal{R}[k]} y^{out}_j[k] + \sum\limits_{v_j \in  \mathcal{R}[k]} y^{ins}_j[k] = 2 \sum\limits_{v_j \in  \mathcal{R}[k]} x_j, 
\end{equation}
for $k < k'$. 
Let us now assume for simplicity that no arrivals occur at time step $k$ (the case where arrivals and departures occur simultaneously at time step $k$ can be analyzed by combining equations \eqref{eq:equation_arri} and \eqref{eq:equation_depar_equa}). 
This means that \eqref{eq:equation_depar_equa} becomes equivalent to 
\begin{equation}\label{eq:equation_k+1}
\begin{aligned} 
&\sum\limits_{v_j \in \mathcal{V}[k+1]} y^{out}_j[k+1] + \sum\limits_{v_j \in \mathcal{V}[k+1]} y^{ins}_j[k+1] \\ & = 2\sum\limits_{v_j \in \mathcal{V}[k+1]} x_j, \ \forall \ k < k'.  
\end{aligned}
\end{equation}
As a result, \eqref{sum_preserve} remains valid at time step $k+1$. 
This means that the sum preservation property is maintained even when nodes leave the network. 

Let us now examine the network behavior for time steps $k \geq k'$. According to Assumption~\ref{existence_stable_time_step}, no further node arrivals or departures occur after this point. 
However, the communication links between nodes continue to change over time. 
This results in an open dynamic network defined as $\mathcal{G}_d[k] = (\mathcal{V}_{\mathcal{R}}, \mathcal{E}[k])$ for $k \geq k'$. 
For $\mathcal{G}_d[k]$ we have that Assumption~\ref{strong_connectivity_stable_union_graph} holds. 
This means that for time steps $k \geq k'$ the open dynamic network $\mathcal{G}_d[k]$ is $T$-jointly strongly connected (i.e., its virtual union digraph $\mathcal{G}_d^\prime$ is strongly connected).  
Given that $\mathcal{G}_d[k]$ is $T$-jointly strongly connected and experiences no node arrivals or departures for $k \geq k'$, we can analyze its convergence in a manner similar to that presented in \cite[Theorem~$1$]{2022:Rikos_Hadj_Johan}. 
We omit the detailed proof here due to space limitations, but the analysis follows the same principles as in the cited theorem.
\end{proof}

\begin{remark}[Intuition of Theorem~\ref{main_convergence_condition_theorem}]
    In Theorem~\ref{main_convergence_condition_theorem} we show that departing nodes must have at least one remaining out-neighbor to ensure that Algorithm~\ref{algorithm1} enables active nodes to solve problem \textbf{P1}. 
    If a departing node can transmit to at least one remaining node before departure it successfully eliminates its initial state from the network and also preserves essential information. 
    This condition allows Algorithm~\ref{algorithm1} to converge correctly as it preserves useful information and eliminates outdated data from the network. 
    However, when a departing node has no remaining out-neighbors then its stored information is lost. 
    This may lead to lead to inconsistency in the network’s state and prevent the active nodes from correctly solving problem~\textbf{P1}. 
\end{remark}
% ===============================================
%
%
% APPLICATION
%
%
% ===============================================
\section{SIMULATION RESULTS}\label{sec:simulation}
We now illustrate the performance of Algorithm~\ref{algorithm1} with numerical simulations. 
We consider an open dynamic network $\mathcal{G}_d[k] = (\mathcal{V}[k], \mathcal{E}[k])$ with $n = 150$ potentially participating nodes. 
Assumptions~\ref{awareness_of_remaining_out_neighbors}, \ref{existence_stable_time_step}, \ref{strong_connectivity_stable_union_graph} hold. 
At time step $k=0$ we have $n[0] = 100$ active nodes. 
Each initially active node $v_j$ has an initial state $x_j$ that is an integer randomly chosen from the interval $[1, 10]$ with uniform probability. 
The nodes that are inactive at time step $k=0$ (i.e., the nodes that belong in the set $\mathcal{V}^\prime \setminus \mathcal{V}[0]$) and nodes that may enter the network at time step $k\geq1$ (i.e., nodes that were inactive at $k=0$ or nodes that departed at $k\geq1$ and entered again at a subsequent time step) have initial states that are integers randomly chosen from the interval $[10, 20]$ with uniform probability.
% The nodes that are inactive at time step $k=0$ (i.e., the nodes that belong in the set $\mathcal{V} \setminus \mathcal{V}[0]$) have initial states that are integers randomly chosen from the interval $[10, 20]$ with uniform probability.  
Between the time intervals $1 < k \leq 80$ and $150 < k \leq 230$, the network size may increase or decrease by 1 due to agent arrival or departure, with $10\%$ in the former interval, and $20\%$ in the latter, which is followed \cite[Section~V]{makridis2024average}.
% In our simulations $1 < k \leq 80$, inactive nodes have a $10\%$ probability of arriving in the network at any given time step, while there is a $90\%$ probability that they remain inactive. 
% Similarly, active nodes may depart from the network with a $10\%$ probability during this period, while with $90\%$ probability they may remain in the network. 
% During the time interval $151 < k \leq 230$, inactive nodes have a $20\%$ probability of arriving and $80\%$ probability of remaining inactive. 
% Also, active nodes have a $20\%$ probability of departing and $80\%$ probability of remaining active. 
Between the time intervals $80 < k \leq 150$ and $230 < k \leq 300$, the network is assumed to be stable (no arriving or departure). 
Additionally, we have (i) $k' = 230$ (see Assumption~\ref{existence_stable_time_step}), (ii) our network $\mathcal{G}_d[k]$ is $T$-jointly strongly connected for $T = 20$ and (see Assumption~\ref{strong_connectivity_stable_union_graph}), and (iii) for every departing node $v_j \in \mathcal{D}[k]$ it holds that  $| \ \mathcal{N}^{+}_j[k] \cap \mathcal{R}[k] \ | \geq 1$ during every time step $k$ (see Theorem~\ref{main_convergence_condition_theorem}). 
In what follows we illustrate over time $k$ (i) the evolution of the states of active nodes $q^s_j[k]$, and (ii) the time-varying average consensus error $\varepsilon[k]$ defined as
\begin{equation}
\begin{aligned}
\varepsilon[k] = & \sum\limits_{\{v_j \in \mathcal{V}[k] | \lceil y_j[k] / z_j[k] \rceil > \lceil q[k] \rceil\}} \!\!\!\!\!\!\!\!\!\!\!\!\!(\lceil 
 y_j[k] / z_j[k] \rceil \!-\! \lceil q[k] \rceil) +\! \\&\sum\limits_{\{v_j \in \mathcal{V}[k] | \lfloor y_j[k] / z_j[k] \rfloor < \lfloor q[k] \rfloor\}}\!\!\!\!\!\!\!\!\!\!\!\!\!(\lfloor q[k] \rfloor - \lfloor y_j[k] / z_j[k] \rfloor) , 
\end{aligned}\label{error_plot}
\end{equation}
where $q[k]$ is defined in \eqref{goal}. 

% We evaluate the performance of the proposed algorithm with numerical simulations. In what follows we illustrate the time-varying average consensus error $\varepsilon[k]$, defined as
% \begin{equation}\label{eq:k^prime_err}
% \begin{aligned}
% \varepsilon[k] = & \sum\limits_{\{v_j\in\mathcal{V}(k):\lceil y_j(k)/z_j(k)\rceil>\lceil q(k)\rceil\}} \!\!\!\!\!\!\!\!\!\!\!\!\!(\lceil y_j(k)/z_j(k)\rceil\!-\!\lceil q(k)\rceil)+\! \\&\sum\limits_{\{v_j\in\mathcal{V}(k):\lfloor y_j(k)/z_j(k)\rfloor<\lfloor q(k)\rfloor\}}\!\!\!\!\!\!\!\!\!\!\!\!\!(\lfloor q(k)\rfloor-\lfloor y_j(k)/z_j(k)\rfloor).
% \end{aligned}\nonumber
% \end{equation}
% To gain some insights on the performance of the algorithm \ref{algorithm1}, we consider 100 initially active agents at time step k = 0, i.e., $n(0) = |\mathcal{V}(0)| = 100$, from a total of $n = |\mathcal{V}| = 150$ potentially active agents. Each initially active agent $v_j \in \mathcal{V}(0)$ initializes its initial state $y_j(0) = 2y_j^{init}, z_j(0)=2$, where $y_j^{init}$ is assumed to be chosen uniformly at random in the interval [1, 10]. To illustratively highlight the time varying nature of the average consensus value, we assume that arriving agents $v_j \in \mathcal{A}(k)$ enter the network with a greater mass variable $y_j(k) = 2y_j^{init}, z_j(k)=2$, where $y_j^{init}$ is chosen uniformly at random in the interval [10, 20].

In Fig.~\ref{fig:state variables in open dynamic network} we present the evolution of the state variable $q^s_j[k]$ of active agents $v_j \in \mathcal{V}[k]$ during the execution of Algorithm~\ref{algorithm1}. 
Also, the real average of the initial states of the active agents $q[k]$ at time step $k$ is shown in the black dashed line. 
We can see that for time steps $1 < k \leq 80$ the states of the active nodes are tracking the quantized average ($\lfloor q[k] \rfloor$, or $\lceil q[k] \rceil$) but do not stabilize to a specific value due to nodes arriving and departing (which causes the value $q[k]$ to change). 
However, in the time interval $80 < k \leq 150$ the active nodes' states are becoming equal to the quantized average of their initial states $\lfloor q[k] \rfloor$, or $\lceil q[k] \rceil$. 
Following that, for time steps $151 < k \leq 230$ the active nodes again track the quantized average without stabilizing due to continued node arrivals and departures. 
Finally, for $230 < k \leq 300$ no departures or arrivals occur and the active nodes' states are becoming equal to the quantized average of their initial states, thus solving problem \textbf{P1}.

In Fig.~\ref{fig:error in open dynamic network} we present the evolution of $\varepsilon[k]$ (shown in \eqref{error_plot}) over time $k$ during the execution of Algorithm~\ref{algorithm1}. 
This plot provides us similar insights as Fig.~\ref{fig:state variables in open dynamic network}. 
Specifically, for time intervals $1 < k \leq 80$ and $151 < k \leq 230$ where nodes arrive or depart the network, the error $\varepsilon[k]$ increases and fluctuates due to $q[k]$ changing with time. 
However, for time intervals $80 < k \leq 150$ and $230 < k \leq 300$ where no arrivals or departures occur $\varepsilon[k]$ becomes equal to zero in finite time. 
As a result, this plot confirms that that Algorithm~\ref{algorithm1} enables active nodes to successfully solve problem \textbf{P1} when the network stabilizes. 

% the average consensus error decays in between stable network time intervals, while at the time instants where agents are arriving or departing, the average consensus error increases abruptly due to the change of the initial states of the network. However, we note that, during long enough time intervals (i.e., $80 < k \leq 150$ and $230 < k \leq 300$) in which the network is stable, the average consensus error reduces up to zero, until the next perturbation of the network composition.

\begin{figure}
\begin{center}
\includegraphics[width=0.86\columnwidth]{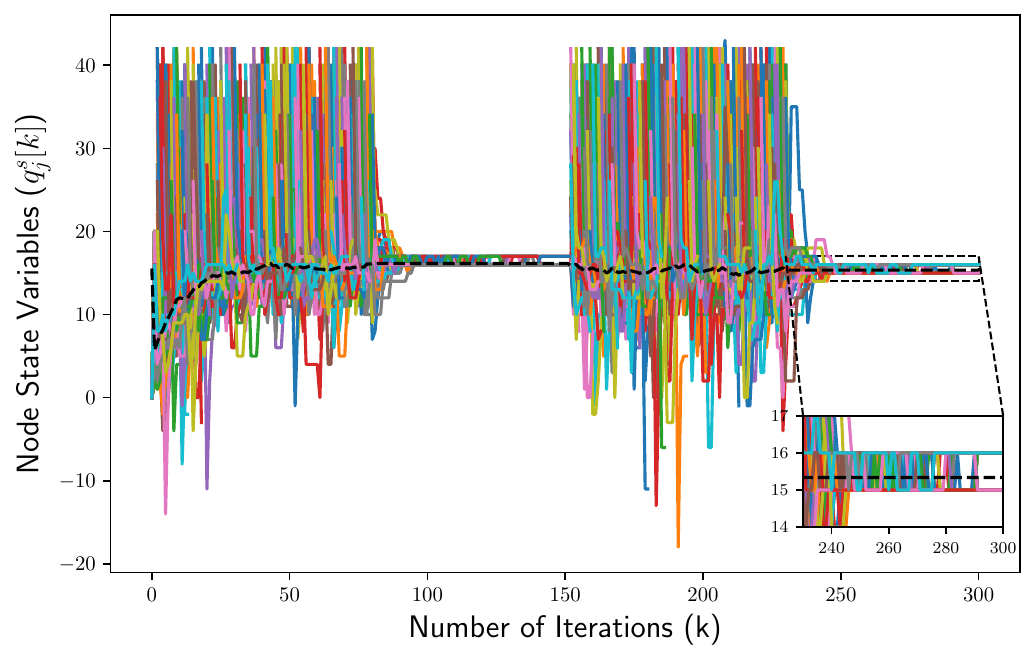}    
\caption{Evolution of active agents' state variables $q^s_j[k]$ over time $k$ during the execution of Algorithm~\ref{algorithm1}.}
\label{fig:state variables in open dynamic network}
\end{center}
\end{figure}

% \begin{figure}
% \begin{center}
% \includegraphics[width=8.4cm]{fig/open_static_state_variables.png}    % The printed column width is 8.4 cm.
% \caption{Evolution with time k, of the active agents’ state variables $q^s_j(k)$ in open static network.} 
% \label{fig:state variables in open static network}
% \end{center}
% \end{figure}

\begin{figure}
\begin{center}
\includegraphics[width=0.85\columnwidth]{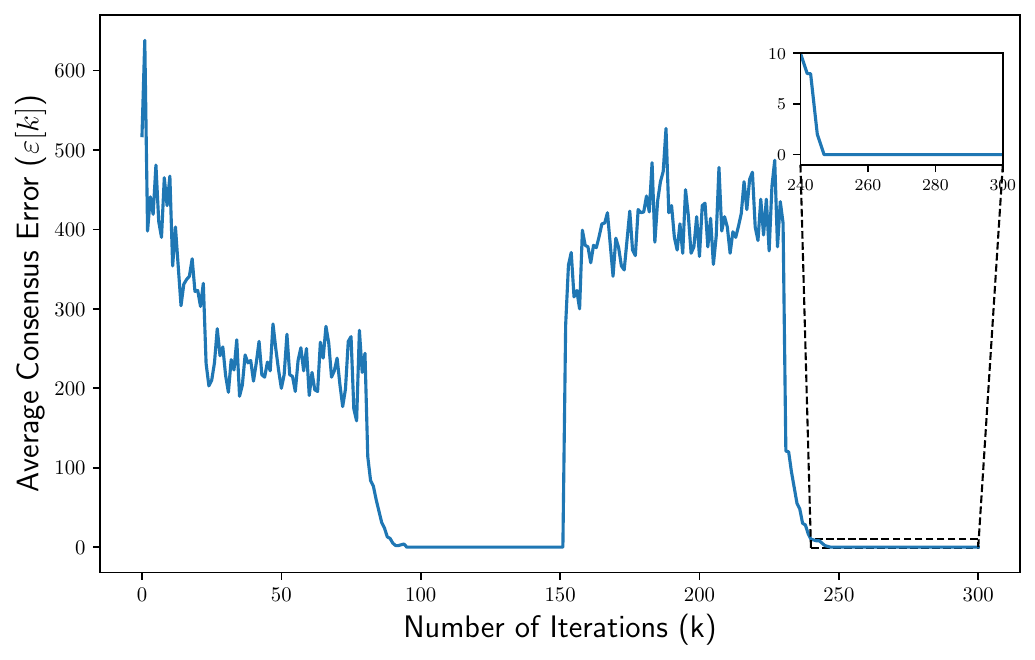} 
\caption{Evolution of the average consensus error $\varepsilon[k]$ (shown in \eqref{error_plot}) over time $k$ during the execution of Algorithm~\ref{algorithm1}.}
\label{fig:error in open dynamic network}
\end{center}
\end{figure}

% \begin{figure}
% \begin{center}
% \includegraphics[width=8.4cm]{fig/open_static_error.png}    % The printed column width is 8.4 cm.
% \caption{The average consensus error $\epsilon(k)$ in open static network} 
% \label{fig:error in open static network}
% \end{center}
% \end{figure}

% \begin{figure}
% \begin{center}
% \includegraphics[width=8.4cm]{fig/nodes' number.png} 
% \caption{Evolution with time $k$ of the number of active agents $n[k]$ during Algorithm~\ref{algorithm1}.}
% \label{fig:the number of active agents}
% \end{center}
% \end{figure}

% \todo{how much lower commuication we need compared to Themis}

% ===============================================
%
%
% CONCLUSIONS
%
%
% ===============================================
\section{Conclusions and Future Directions}\label{sec:conclusions}

In this paper, we presented a novel distributed average consensus algorithm designed to operate in open dynamic multi-agent systems with directed communication links. 
Our proposed algorithm is the first to enable nodes to (i) exhibit efficient (quantized) communication, (ii) operate in the present of time-varying links among active nodes. 
We analyzed the operation of our algorithm and established its correctness. 
Additionally, we presented a set of topological conditions that enable nodes to calculate the desired result in finite time. 
We concluded our paper with a numerical simulation for illustrating the correctness of our algorithm.  

In the future we plan to design robust algorithms capable of operating in open multi-agent systems subject to various network unreliabilities, such as packet losses.
Additionally, we intend to expand our focus to address the challenges of distributed optimization in OMAS.

% \addtolength{\textheight}{-12cm}   % This command serves to balance the column lengths
                                  % on the last page of the document manually. It shortens
                                  % the textheight of the last page by a suitable amount.
                                  % This command does not take effect until the next page
                                  % so it should come on the page before the last. Make
                                  % sure that you do not shorten the textheight too much.

%%%%%%%%%%%%%%%%%%%%%%%%%%%%%%%%%%%%%%%%%%%%%%%%%%%%%%%%%%%%%%%%%%%%%%%%%%%%%%%%

%%%%%%%%%%%%%%%%%%%%%%%%%%%%%%%%%%%%%%%%%%%%%%%%%%%%%%%%%%%%%%%%%%%%%%%%%%%%%%%%
\bibliographystyle{IEEEtran}
\bibliography{IEEEabrv,reference}

\end{document}